\documentclass[journal,twocolumn]{IEEEtran}
\IEEEoverridecommandlockouts
\usepackage{cite}
\usepackage{amsmath,amssymb,amsfonts}
\usepackage{multirow}
\usepackage{graphicx}
\usepackage{subfig}
\usepackage{textcomp}
\usepackage{xcolor}
\usepackage[utf8]{inputenc}
\usepackage[english]{babel}
\usepackage{fancyhdr}
\usepackage[font=footnotesize,skip=0pt]{caption}
\usepackage{array}
\usepackage{tabularx}
\usepackage[ruled,vlined]{algorithm2e}
\usepackage{algpseudocode}
\usepackage{amsmath}
\usepackage{amsthm}
\usepackage{mathtools, nccmath} 

\newtheorem{lemma}{Lemma}

\newtheorem{corollary}{Corollary}

\usepackage[english]{babel}

\def\BibTeX{{\rm B\kern-.05em{\sc i\kern-.025em b}\kern-.08em
    T\kern-.1667em\lower.7ex\hbox{E}\kern-.125emX}}
\begin{document}

\title{User Pairing and Outage Analysis in Multi-Carrier  NOMA-THz Networks\\
}
\author{\IEEEauthorblockN{Sadeq Bani Melhem, \IEEEmembership{Student Member,~IEEE} and Hina Tabassum, \IEEEmembership{Senior Member,~IEEE}}

\vspace{-10mm}
\thanks{S.~B.Melhem and H.~Tabassum are with the  York University, Canada
(e-mail: sadeq@yorku.ca and hina@eecs.yorku.ca). This work is supported by the Discovery
Grant from the Natural Sciences and Engineering Research Council of Canada.}
}

\raggedbottom

\maketitle

\begin{abstract}
 This paper provides a comprehensive framework to analyze the performance of non-orthogonal multiple access (NOMA) in the downlink transmission of a  single-carrier and multi-carrier terahertz (THz) network. Specifically, we first develop a novel user pairing scheme for the THz-NOMA network which ensures the performance gains of NOMA over orthogonal multiple access (OMA) for each individual user in the NOMA pair and adapts according to the molecular absorption. Then, we characterize novel outage probability expressions considering a single-carrier and multi-carrier THz-NOMA network in the presence of various user pairing schemes, Nakagami-$m$ channel fading, and molecular absorption noise. We propose a moment-generating-function (MGF) based approach to analyze the outage probability of users in a multi-carrier THz network. Furthermore, for negligible thermal noise, we provide simplified single-integral expressions to compute the outage probability in a multi-carrier network.    Numerical results demonstrate the performance of the proposed user-pairing scheme and validate the accuracy of the derived expressions.
\end{abstract}

\begin{IEEEkeywords}
THz, NOMA, user-pairing, multi-carrier, outage probability, molecular absorption.
\end{IEEEkeywords}

\raggedbottom
\section{Introduction}
Fueled by the emergence of machine-type communications in a variety of wireless applications, the provisioning of massive connectivity becomes instrumental.  On the other hand, accommodating trillions of devices within the extremely congested and limited sub-6GHz spectrum is becoming challenging. In this context, shifting to higher frequency terahertz (THz) communication is under consideration  to obtain the data rates in the order of hundreds of Gbps \cite{rasti2021evolution}.
Also,  non-orthogonal multiple access (NOMA) is becoming popular to support multiple users in the same frequency and time while leveraging on efficient interference cancellation mechanisms.

To date, most of the research focuses on analyzing the outage probabilities (OPs) of users in a \textit{single-carrier sub-6GHz} NOMA network~\cite{8309422,8108691,6868214}.
An exception is \cite{9031560} where the authors  considered   the OP analysis for multi-carrier NOMA in \textcolor{black}{sub-6GHz or radio frequency} (RF) networks. The derived expressions rely on approximations and are in the form of Fox H's function.
Nevertheless, the performance of NOMA in THz networks is not well-understood neither in single-carrier nor in multi-carrier set-up.
\textcolor{black}{Different from RF, the THz transmissions are susceptible to unique challenges such as \textit{molecular absorption noise}\footnote{Molecular absorption noise causes signal loss as the electromagnetic (EM) energy gets partially transformed into internal energy of the molecules.}, \textit{molecular absorption} at different frequencies leading to serious path-loss peaks, and a sophisticated \textit{Beer's Lambert law-based channel} model.}

\textcolor{black}{Recently, a handful of research works considered analyzing the performance of NOMA in THz networks. 
In \cite{SABUJ2020107508}, the authors optimized the transmission powers to maximize energy efficiency in a single carrier THz network. The authors in \cite{9569475} present an energy-efficient cooperative NOMA strategy for multi-user indoor  {multi-input-single-output (MISO) THz network} that assures the minimum required rate for cell-edge users. 
}\textcolor{black}{Lately, preliminary research demonstrated the gains of NOMA over orthogonal multiple access (OMA) in THz networks using computer simulations \cite{9298080}. The authors considered a single THz channel, no fading, random user pairing, and the molecular absorption model was limited to 250 - 450 GHz. 
In \cite{9115278}, the authors optimized power allocation, user clustering, and hybrid precoding to maximize energy efficiency in a THz-NOMA system. In \cite{8824971}, the authors considered maximizing the network sum-rate while optimizing the beamforming weights, sub-array selection, power allocation, and sub-band assignment in a THz-NOMA network subject to the user's quality of service (QoS) requirements. }

\textcolor{black}{
To the best of our knowledge, there is no research work that provides a comprehensive framework for the OP analysis of users in a \textit{single-carrier and multi-carrier THz-NOMA} network and/or develops a low-complexity \textit{user pairing scheme} for the THz-NOMA network with a guaranteed gain over OMA for each individual user.
It is also noteworthy that analyzing the performance of multi-carrier NOMA  in RF networks is straight-forward as all sub-channels experience identical channel statistics. However, in THz networks, each sub-channel experiences a different molecular absorption indicated by its molecular absorption coefficient. Thus the channel statistics on each sub-channel are non-identical. Furthermore, the mathematical structure of the channel propagation model based on  Beer's-Lambert law adds to the challenge. Subsequently, characterizing OP expressions in a multi-carrier THz-NOMA network is challenging.  In the sequel, the main contributions of this paper include:}

\textcolor{black}{
 $\bullet$ We develop a novel low-complexity user pairing scheme in a THz-NOMA network.  The proposed scheme ensures NOMA outperform OMA for each individual user in the NOMA pair and adapts according to the molecular absorption.}
 
$\bullet$ We characterize the exact outage expressions in a single-carrier and multi-carrier THz-NOMA network considering Nakagami-$m$ fading to capture the line-of-sight feature of THz transmissions, and molecular absorption noise. Our expressions can be customized for \textit{various user-pairing schemes} and are applicable to the \textit{entire THz frequency range}.

$\bullet$ For multi-carrier THz-NOMA network,   we propose a moment-generating-function (MGF)-based exact approach and derive simplified  single integral expressions to compute the OP,  as opposed to Fox-H's based expressions in \cite{9031560}.

$\bullet$ Our numerical results validate the accuracy of our derived expressions and demonstrate the performance of the proposed user-pairing scheme compared to the conventional random pairing and nearest-farthest schemes.

\section{System model and Assumptions}
We consider the downlink NOMA transmission of a single-antenna access point (AP) operating at THz frequency. We consider two users\footnote{ \textcolor{black}{Our framework can be generalized for multiple users by making multiple two-user NOMA pairs in orthogonal time/frequency resource blocks. However, the consideration of other NOMA pairs would not effect the performance of users in a specific NOMA pair. Two-user NOMA  has been
standardized as Multi-user Superposition Transmission (MUST) in 3GPP \cite{1111}}.} in each NOMA cluster that are located at distances \(d_1\) and \(d_2\) from the AP such that \(d_1 < d_2\). The users located at \(d_1\) and \(d_2\)  are referred to as user 1 (\(U_1\)) and user 2 (\(U_2\)), respectively. 
\subsubsection{Channel Model} The line-of-sight (LoS) channel power  between the AP and user \(i\) is formulated as follows \cite{sayehvand2020interference}:
 \begin{equation}
 |{h_{L}(d_i )}|^2= \left(\frac{c}{4{\pi} f d_i}\right)^2\ e^{-k(f) d_i}= \zeta  d_i^{-2} e^{-k(f) d_i},
\end{equation}
where the molecular absorption coefficient $k(f)$  is defined as:
\small
\begin{equation}\label{Ka(f)}
{k(f)}\mathrm = \sum_{(i,g)} {\frac{p^2 T_{ \mathrm {sp}}  q^{(i,g)}  {N_A} S^{(i,g)} f \tanh{\left( \frac{h  c  f}{2 k_b T}\right)}} {p_0  V T^2 f^{(i,g)}_c \tanh{\left(\frac{h  c f^{(i,g)}_c}{2 k_b T}\right)}}}   F^{(i,g)}\left(f\right),
\end{equation}
\normalsize
where \(p\) and \(p_0\) indicate the ambient pressure of the transmission medium and the reference pressure, respectively, \(T\) is the temperature of the transmission medium, \(T_{\mathrm{sp}}\) denotes the temperature at standard pressure, \(q^{(i,g)}\) indicates the mixing ratio of the isotopologue \(i\) of gas \(g\), \(N_A\) refers to the Avogadro number, and \(V\) is the gas constant. The line intensity \(S^{(i,g)}\)  defines the strength of the absorption by a specific type of molecules and is directly obtained from the HITRAN database \cite{GORDON20173}. In addition, \(f\) and \(f^{(i,g)}_c\) denote the THz frequency  and the resonant frequency of gas \(g\), respectively, \(c\) is the speed of light, \(h\) is the Planck's constant, and \(k_b\) is  the Boltzmann constant. For the frequency band $f$, we consider the Van Vleck-Weisskopf asymmetric line shape  to evaluate:
\small
\begin{equation}
F^{(i,g)}(f)= \frac{100 \:c \:\alpha^{(i,g)} f} {\pi \:f_c^{(i,g)}} \left(\frac{1}{Y^2+ (\alpha^{(i,g)})^2}+\frac{1}{Z^2+(\alpha^{(i,g)})^2}\right),
\nonumber
\end{equation}
\normalsize
where $Y= f+f_c^{(i,g)}$ and $Z= f-f_c^{(i,g)}$, and the Lorentz half-width is given as follows:
$$\alpha^{(i,g)}= \left( \left(1 - q^{(i,g)} \right) \alpha_{\mathrm {air}}^{(i,g)}+ q^{(i,g)} \alpha_0^{(i,g)}\right) \left(\frac{p}{p_0}\right) \left(\frac{T_0}{T}\right)^{\gamma},$$ where \(T_0\) indicates the reference temperature, the parameters  air half-widths, \(\alpha_{\mathrm{air}}^{(i,g)}\), self-broadened half-widths, \(\alpha_0^{(i,g)}\), and temperature broadening coefficient, \(\gamma\), are obtained directly from the HITRAN database \cite{GORDON20173}. The resonant frequency of gas $g$ at reference pressure \(p_0\) is determined as \( f_c^{(i,g)}=f_{ {c_0}}^{(i,g)} + {\delta}^{{(i,g)}}{(\frac{p}{p_0})} \), where \(\delta^{(i,g)}\) is the linear pressure shift \cite{5995306}.

\subsubsection{SINR - NOMA Model}
The signal-to-interference-plus-noise ratio (SINR) of $U_1$ and $U_2$ with perfect successive-interference cancellation are modeled, respectively, as follows:
\begin{equation} \label{SINR1noma}
    \mathrm{SINR}_{\mathrm{1}}^{(\mathrm{noma})}=\frac{a_1 G_t G_r P |{h_{L}(d_1 )}|^2 \chi_1 }{{N_{1}^\mathrm{(noma)}}},
\end{equation}
\begin{equation}\label{SINR2noma}
    \mathrm{SINR}_{\mathrm{2}}^{(\mathrm{noma})}=\frac{a_2 G_t G_r P |{h_{L}(d_2 )}|^2 \chi_2}{a_1 G_t G_r P |{h_{L}(d_2 )}|^2 \chi_2 +{N_{2}^\mathrm{(noma)}}},
\end{equation}
where $\chi$ is Nakagami-$m$ fading channel, and $G_t$ and $G_r$ are the directional antenna gains of  AP and users, respectively. Beam alignment strategies are assumed that align the main lobes of the users and the THz AP. The noise at the receivers of $U_1$ and $U_2$ comprises of thermal noise $N_0$ and molecular absorption noise as defined, respectively, below:
\begin{align}
 &N_{1}^\mathrm{(noma)}= N_0 + a_1 G_t G_r\zeta Pd_1^{-2} (1-e^{-k(f)d_1}) \chi_1,
 \end{align}
 \begin{align}
    {N_{2}^\mathrm{(noma)}} &= N_0 + (a_1  + a_2) G_t G_r \zeta Pd_2^{-2}(1-e^{-k(f)d_2}) \chi_2,
   \nonumber \\
     &= N_0 + G_t G_r \zeta Pd_2^{-2} (1-e^{-k(f)d_2}) \chi_2,
\end{align}
where  \(a_1\) and \(a_2\) depict the fraction of the AP transmit power allocated for \(U_1\) and \(U_2\), respectively, such that  \(a_1 + a_2 = 1\). Also, \(P\) denotes the total transmit power budget of the AP.  
The spectral efficiency of \(U_1\) and \(U_2\) (in bps/Hz) is computed for a duration of time $\hat T$ as follows: 
$$
   \mathrm{C}_{\mathrm{i}}^{(\mathrm{noma})}= \hat T \log_2(1+ \mathrm{SINR}^{(\mathrm{noma})}_{i}), \quad \forall i =\{1,2\}.
$$
\subsubsection{SINR - OMA Model}
The SINR for \(U_1\) and \(U_2\) in  OMA, where each user receives its transmission for a predefined duration of \(\hat T/2\), is modeled as follows:
\begin{equation} \label{SINRioma}
      \mathrm{SINR}_{{i}}^{(\mathrm{oma})}=\frac{G_t G_r P |{h_{L}(d_i )}|^2 \chi_i}{N^{(\mathrm{oma})}_i}, \quad \forall i =\{1,2\},
\end{equation}
where  
$
N_{i}^{(\mathrm{oma})}=N_0+G_t G_r\zeta P d_i^{-2}\left(1-e^{-k(f)d_i}\right) \chi_i.
$
Thus, the spectral efficiency of \(U_1\) and \(U_2\) can be computed as $$
\mathrm{C}_{{i}}^{(\mathrm{oma})}={\frac{\hat T}{2}\log_2}{\left(1+ \mathrm{SINR}_{i}^{(\mathrm{oma})}\right)}, \quad \forall i =\{1,2\}.
$$
Without loss of generality, the duration $\hat T$ is taken as unity.

\section{Outage Analysis: single-Carrier THz-NOMA}
In this section, we first describe the proposed user-pairing scheme along with the two benchmark user-pairing schemes, describe the distance distributions of the users, and present a framework to calculate the outage of users for the proposed and benchmark user-pairing schemes.

For bench-marking purposes, we consider a random  and nearest-farthest user pairing schemes. \textcolor{black}{In the random pairing scheme, we pick only two users randomly with independent and identically distributed distances $r_1$ and $r_2$ from AP. The near  and far user's distance can thus be defined as  $d_1 = \mathrm{min} (r_1,r_2)$ and $d_2 = \mathrm{max} (r_1,r_2)$, respectively.}  Therefore, the PDF and CDF of the distances of  \(U_1\) and \(U_2\) are given, respectively, as:
\small
\begin{equation}\label{CDFRandom}
     f_{d_1}(d_1)=2f_r\left(d_1\right)(1-F_r\left(d_1\right)), \:\: F_{d_1}(d_1)=1-[1-F_r\left(d_1\right)]^2,
\end{equation}
\begin{equation} \label{CDFRandomFar}
     f_{d_2}(d_{2})=2f_r\left(d_2\right)F_r\left(d_2\right), \quad  F_{d_2}\left(d_2\right)=[F_r\left(d_2\right)]^2,
\end{equation}
\normalsize
On the other hand, in the nearest-farthest scheme, we select two users out of $N$ users with minimum and maximum distances from the AP. \textcolor{black}{The near  and far user's distance can thus be defined as  $d_1 = \mathrm{min} (r_1,r_2, \cdots, r_N)$ and $d_2 = \mathrm{max} (r_1,r_2, \cdots, r_N)$, respectively.} 
The PDF and CDF of $d_1$ and $d_2$  can thus be given, respectively, as follows:
\small
\begin{equation}\label{CDF_NF}
     f_{d_1}(d_1)=N[1-F_r\left(d_1\right)]^{N-1}  f_r\left(d_1\right), 
     F_{d_1}(d_1)=1-\left[1-F_r\left(d_1\right)\right]^N,
\end{equation}
\begin{equation}\label{CDF_NF_Far}
 f_{d_2}\left(d_2\right)=N\left[F_r\left(d_2\right)\right]^{N-1} f_r\left(d_2\right), F_{d_2}\left(d_2\right)=\left[F_r\left(d_2\right)\right]^N,
\end{equation}\normalsize
where the PDF and CDF of $r$ are given, respectively, as 
$f_r\left(r\right)=\frac{2r}{R^2}, \mathrm{and} \: F_r\left(r\right)=\frac{r^2}{R^2},
$ since all users are uniformly distributed in a circular region of radius $R$.
\begin{figure}
\begin{center}
\includegraphics[totalheight=6cm,width=\linewidth]{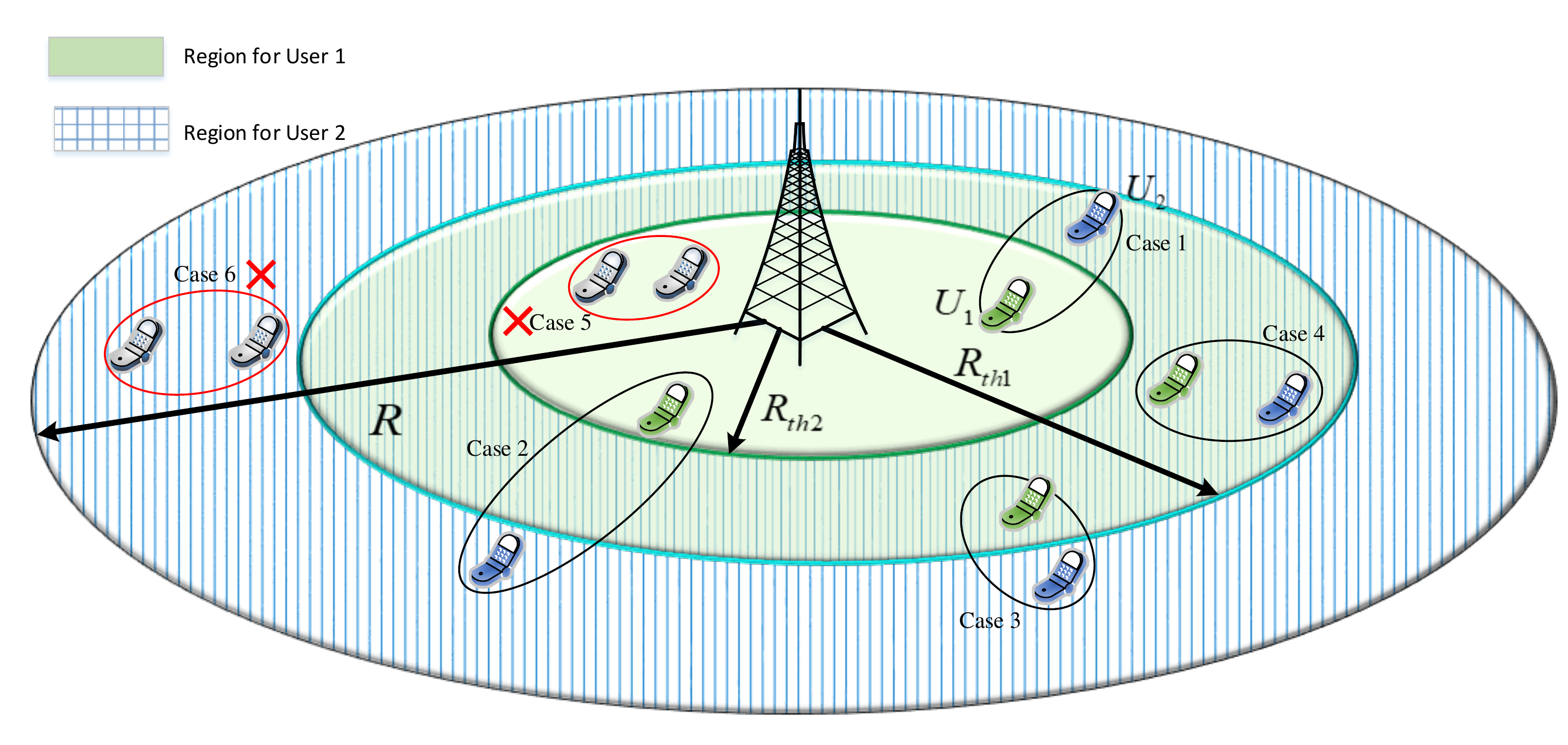}
\caption{An illustration of the proposed user pairing scheme.}
\label{Fig1}
\end{center}
\end{figure}

\subsection{Proposed User Grouping Scheme}

\textcolor{black}{ Different from the conventional approach where user-pairing is typically performed to maximize the sum-rate performance of users in a NOMA pair, the proposed user scheme selects  only those users  who can benefit from NOMA transmissions compared to OMA transmission.  The  scheme enables fairness among users by allowing each of them to join a NOMA pair if and only if they have a benefit over OMA. 
}
{To ensure the gains of NOMA over OMA for each user, we derive the necessary condition to ensure \({C}_{{i}}^\mathrm{(noma)} > {C}_{{i}}^\mathrm{(oma)},\:\forall i=\{1,2\}\) and  \(d_1<d_2\) 
as shown in the  following.}
\begin{lemma}[Proposed User Pairing Scheme] 
\textcolor{black}{The proposed pairing scheme is a sufficient condition for NOMA to outperform OMA, where the near  user should be within the distance \( R_\mathrm{th1}\) and the far user should be located beyond the distance \(R_\mathrm{th2} \), such that $d_1<d_2$, where:}
\begin{equation}\label{eq:13}
  R_{\mathrm{th1}}= \frac{1}{\mathrm{} {k(f)}} \ {\mathrm{ln }  \left(\frac{1-a_1}{1-2a_1}\right)},
 \end{equation}
 \begin{equation}\label{eq:14}
  R_{\mathrm{th2}}= \frac{1}{\mathrm{} {k(f)}} \ {\mathrm{ln } \left(\frac{{a_1}^2}{1-2a_1}+1  \right)}.
 \end{equation}
 Comparing (\ref{eq:13}) to (\ref{eq:14}), it is straight-foward to verify that $R_{\mathrm{th1}}> R_{\mathrm{th2}}$ always, since $a_1+a_2=1$.
 \end{lemma}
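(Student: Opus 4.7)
The plan is to derive each threshold from the rate-gain inequality $C_i^{(\mathrm{noma})} > C_i^{(\mathrm{oma})}$ for $i\in\{1,2\}$ and then to check $R_{\mathrm{th1}} > R_{\mathrm{th2}}$ by direct algebra. Using the monotonicity of $\log_2$ and the $1/2$ time factor in $C_i^{(\mathrm{oma})}$, each rate-gain condition is equivalent to $(1+\mathrm{SINR}_i^{(\mathrm{noma})})^2 > 1+\mathrm{SINR}_i^{(\mathrm{oma})}$. Substituting (3), (4), and (7) together with $|h_L(d_i)|^2 = \zeta d_i^{-2} e^{-k(f)d_i}$ and the molecular-absorption noise models, the geometric path-loss factor $\zeta d_i^{-2}\chi_i$ appears identically in both the received-signal term and in the absorption-noise term and therefore cancels. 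What remains is a rational function of $x_i := e^{-k(f) d_i}$ alone, which is the structural feature that produces thresholds controlled solely by $k(f)$ and the power split $a_1$.

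For user~2, using $a_1+a_2=1$ one obtains $1+\mathrm{SINR}_2^{(\mathrm{noma})} = 1/(1-a_2 x_2)$ and $1+\mathrm{SINR}_2^{(\mathrm{oma})} = 1/(1-x_2)$; the rate-gain inequality then collapses to the quadratic $(1-a_2 x_2)^2 < 1-x_2$, whose solution (valid only when $a_1<1/2$) is $x_2 < (1-2a_1)/(1-a_1)^2$. Inverting the monotone map $x_2 \mapsto d_2$ yields $d_2 > R_{\mathrm{th2}}$, matching (14) once one notes that $(1-a_1)^2/(1-2a_1) = a_1^2/(1-2a_1)+1$. The analogous substitution for user~1, in which $N_1^{(\mathrm{noma})}$ is scaled only by $a_1$ because SIC has already removed $s_2$, produces a different cancellation pattern that reduces the rate-gain inequality to the linear condition $(1-a_1)x_1 > 1-2a_1$, i.e.\ $x_1 > (1-2a_1)/(1-a_1)$; inverting the exponential then gives $d_1 < R_{\mathrm{th1}}$ exactly as in (13).

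Finally, $R_{\mathrm{th1}} > R_{\mathrm{th2}}$ follows from a single comparison of the logarithm arguments: since $1-2a_1>0$, the inequality $(1-a_1)/(1-2a_1) > (1-a_1)^2/(1-2a_1)$ is equivalent to $1 > 1-a_1$, i.e.\ $a_1>0$, which holds for any valid power allocation; applying $(1/k(f))\ln(\cdot)$ preserves the inequality since $k(f)>0$. The main obstacle I anticipate is the bookkeeping around the asymmetric molecular-absorption noise at the two receivers: it is precisely the $a_1$-versus-$(a_1+a_2)$ scaling that turns an otherwise symmetric comparison into a linear condition on $x_1$ at user~1 and a quadratic one on $x_2$ at user~2, and thereby produces two distinct thresholds with the claimed ordering.
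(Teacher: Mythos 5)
Your overall strategy---turning the $1/2$ time factor in $C_i^{(\mathrm{oma})}$ into the inequality $(1+\mathrm{SINR}_i^{(\mathrm{noma})})^2 > 1+\mathrm{SINR}_i^{(\mathrm{oma})}$, dropping $N_0$ against the molecular-absorption noise, and reducing everything to a rational inequality in $x_i = e^{-k(f)d_i}$---is the same route the paper sketches. Your far-user algebra is correct and complete: $(1-a_2x_2)^2 < 1-x_2$ does collapse to $x_2 < (1-2a_1)/(1-a_1)^2$, which inverts to $d_2 > R_{\mathrm{th2}}$ after the identity $(1-a_1)^2/(1-2a_1)=a_1^2/(1-2a_1)+1$. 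The final ordering argument $R_{\mathrm{th1}}>R_{\mathrm{th2}} \Leftrightarrow a_1>0$ is also fine.

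The gap is the near-user threshold. With the paper's stated model, $N_1^{(\mathrm{noma})} = N_0 + a_1 G_tG_r\zeta P d_1^{-2}(1-x_1)\chi_1$, so once $N_0$ is neglected the common factor $a_1 G_tG_r\zeta P d_1^{-2}\chi_1$ cancels between the signal and the absorption-noise term, leaving $\mathrm{SINR}_1^{(\mathrm{noma})} = x_1/(1-x_1)$---which is \emph{identical} to $\mathrm{SINR}_1^{(\mathrm{oma})}$ under the same approximation, with no $a_1$ surviving anywhere. The rate-gain inequality then reads $1/(1-x_1)^2 > 1/(1-x_1)$, which holds for every $d_1>0$ and yields no threshold at all. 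Hence your claimed reduction to the linear condition $(1-a_1)x_1 > 1-2a_1$ cannot follow from the substitution you describe; it in fact contradicts your own earlier observation that the geometric and fading factors cancel identically (here the $a_1$ scaling cancels too). This symmetry is also visible in the paper's noise-free outage expression for the near user, where $B_1$ drops out entirely. To legitimately obtain $R_{\mathrm{th1}}$ you must identify and state the extra assumption that breaks the NOMA/OMA symmetry at the near receiver---for instance retaining a nonnegligible $N_0$, or modelling the absorption noise as proportional to the total transmit power $P$ rather than to $a_1P$---and then carry the algebra through to $(1-a_1)x_1>1-2a_1$. As written, that step is asserted rather than derived, and it is precisely the step on which the whole of equation \eqref{eq:13} rests.
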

 \begin{proof}
  \color{black}{
 Starting from the condition ${C}_{{i}}^\mathrm{(noma)} > {C}_{{i}}^\mathrm{(oma)},\:\forall i=\{1,2\}$ and taking $\hat{T} =1$ without loss of generality, we have:
 \small
\begin{equation} \label{NOMAivsOMAi}
    \log_2{\left(1+ \mathrm{SINR}_{{i}}^{(\mathrm{noma})}\right)} > {0.5\log}_2{\left(1+\mathrm{SINR}_{{i}}^{(\mathrm{oma})}\right)} ,\:\forall i=\{1,2\}.
\end{equation}
\normalsize
Now, we substitute \eqref{SINR1noma} and  \eqref{SINRioma} in \eqref{NOMAivsOMAi}  for near user, and  \eqref{SINR2noma} and \eqref{SINRioma} in \eqref{NOMAivsOMAi} for far user. After basic algebraic manipulations, we note that $d_1 < R_{\mathrm{th}1}$ and $d_2>R_{\mathrm{th}2}$ are necessary conditions to guarantee the gains of NOMA over OMA for each individual user, where the value of $R_{\mathrm{th}1}$ and $R_{\mathrm{th}2}$  can be written as in \eqref{eq:13} and \eqref{eq:14}, respectively.}
 \end{proof}

 Evidently, as can also be seen in Fig. 1, four user-pairing cases are possible to guarantee the gains of NOMA over OMA for each individual user. The three  regions can vary as a function of $k(f) $ and the allocated powers $a_1$ and $ a_2$, e.g., { increasing $a_1$  will increase both $R_{\mathrm{th1}}$ and $R_{\mathrm{th2}}$. However, increasing $k(f)$ will decrease both $R_{\mathrm{th1}}$ and $R_{\mathrm{th2}}$.} This is different from RF NOMA, wherein  there is only one case possible due to only one threshold distance (or two regions).
 \begin{lemma}
The PDF and CDF of $d_1$ and $d_2$  in the  proposed  scheme are given, respectively, as follows:
\begin{equation}\label{CDFProposed}
    f_{d_1}\left(d_1\right)=\frac{2d_1}{{R^2_\mathrm{th1}}},F_{d_1}\left(d_1\right)=\frac{{d^2_1}}{{R^2_\mathrm{th1}}},
\end{equation}
\begin{equation}\label{CDFProposedFar}
f_{d_2}\left(d_2\right)=\frac{2d_2}{R^2-R_\mathrm{th2}^2},
F_{d_2}\left(d_2\right)=\frac{d_2^2-R_\mathrm{th2}^2}{R^2-R_\mathrm{th2}^2}.
\end{equation}
\end{lemma}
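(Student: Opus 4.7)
The plan is to obtain both distributions by a standard truncation/conditioning argument on the underlying uniform distribution of user locations in the disk of radius $R$. Recall that the baseline distance PDF/CDF for a single user uniformly distributed in a disk of radius $R$ are $f_r(r)=2r/R^2$ and $F_r(r)=r^2/R^2$ on $[0,R]$. Lemma 1 tells us that in the proposed scheme we retain only those realizations in which the near user falls inside the disk of radius $R_{\mathrm{th1}}$ and the far user falls inside the annulus $[R_{\mathrm{th2}},R]$. So the task reduces to computing two truncated (conditional) distance distributions.

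For $U_1$, I would condition the uniform disk distribution on the event $\{r\le R_{\mathrm{th1}}\}$. Since the user is spatially uniform, conditioning on being inside the smaller concentric disk of radius $R_{\mathrm{th1}}$ yields another spatially uniform distribution on that disk, whose radial law is therefore $f_{d_1}(d_1)=2d_1/R_{\mathrm{th1}}^2$ with CDF $F_{d_1}(d_1)=d_1^2/R_{\mathrm{th1}}^2$ on $[0,R_{\mathrm{th1}}]$. Equivalently, one can just compute $f_r(d_1)/F_r(R_{\mathrm{th1}})=(2d_1/R^2)/(R_{\mathrm{th1}}^2/R^2)$ and the $R^2$ factors cancel, giving the claimed expression in \eqref{CDFProposed}.

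For $U_2$, I would apply the same idea but condition on the event $\{r\ge R_{\mathrm{th2}}\}$, so that the far user is spatially uniform in the annulus between $R_{\mathrm{th2}}$ and $R$. The conditional radial PDF is $f_r(d_2)/(1-F_r(R_{\mathrm{th2}}))=(2d_2/R^2)/((R^2-R_{\mathrm{th2}}^2)/R^2)=2d_2/(R^2-R_{\mathrm{th2}}^2)$, and integrating from $R_{\mathrm{th2}}$ gives $F_{d_2}(d_2)=(d_2^2-R_{\mathrm{th2}}^2)/(R^2-R_{\mathrm{th2}}^2)$ on $[R_{\mathrm{th2}},R]$, matching \eqref{CDFProposedFar}. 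Since the two conditioning events are imposed on two different users, $d_1$ and $d_2$ are handled independently and no joint/order-statistic calculation is needed, which is what simplifies this lemma relative to the random and nearest-farthest cases treated in \eqref{CDFRandom}--\eqref{CDF_NF_Far}.

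There is no serious obstacle: the only mild subtlety is justifying that conditioning a uniform radial law $f_r(r)=2r/R^2$ on $\{r\le R_{\mathrm{th1}}\}$ (resp.\ $\{r\ge R_{\mathrm{th2}}\}$) is equivalent to taking the same $2r/(\cdot)$ form on the restricted support, which follows immediately from the definition of a conditional density together with $F_r(R_{\mathrm{th1}})=R_{\mathrm{th1}}^2/R^2$ and $1-F_r(R_{\mathrm{th2}})=(R^2-R_{\mathrm{th2}}^2)/R^2$. One should also note that the inequality $R_{\mathrm{th1}}>R_{\mathrm{th2}}$ established right after Lemma 1 guarantees that both supports are non-degenerate (and the feasibility region $d_1<d_2$ is non-empty) whenever $R>R_{\mathrm{th2}}$, so the densities in \eqref{CDFProposed}--\eqref{CDFProposedFar} are well-defined.
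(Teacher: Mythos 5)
Your proposal is correct and follows essentially the same route as the paper: both treat $d_1$ and $d_2$ as truncations of the uniform-disk radial law $f_r(r)=2r/R^2$ to $[0,R_{\mathrm{th1}}]$ and $[R_{\mathrm{th2}},R]$ respectively, the only cosmetic difference being that you normalize the density directly while the paper writes the truncated CDF first and then differentiates. Your added remarks on non-degeneracy of the supports and the absence of any order-statistic coupling are sensible but not needed beyond what the paper already does.
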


\begin{proof}
\color{black}{
Since all users are uniformly distributed in a circular region of radius $R$,  thus the PDF and CDF of a user from the BS  $r$ are given, respectively, as 
$f_r\left(r\right)=\frac{2r}{R^2} \mathrm{and} \: F_r\left(r\right)=\frac{r^2}{R^2}$. In the proposed scheme, the range of the near user is $ d_{1}\in\ [0,R_{\mathrm{th}1}]$. Subsequently, $d_1$ follows the truncated distribution of $r$ and its CDF can be calculated as follows:
$$F_{d_1}\left(d_1\right)=\frac{F_r\left(d_1\right) - F_r\left(0\right)\ }{F_r\left(R_\mathrm{th1}\right) - F_r\left(0\right)}
 =\frac{\frac{{d_{1}}^2}{{R}^2} -\frac{({0)}^2}{{R}^2}}{\frac{R^2_\mathrm{th1}}{{R}^2}-\ \frac{({0)}^2}{{R}^2}\ }
=\frac{{d^2_1}}{{R^2_\mathrm{th1}}},
$$
Now, we can calculate the PDF by taking the derivative of   the CDF as shown in \textbf{Lemma~2}. 
Similarly,  the far user is located in the range
$ d_{2}\in\ [R_{\mathrm{th}2}, R] $ and the CDF of $d_2$ is given as:
$$F_{d_2}\left(d_2\right)=\frac{F_r\left(d_2\right) - F_r\left(R_\mathrm{th2}\right)\ }{F_r\left(R\right) - F_r\left(R_\mathrm{th2}\right)}
=\frac{d_{2}^2-R_\mathrm{th2}^2}{R^2-R_\mathrm{th2}^2},
$$
Now, we can calculate the PDF of $d_2$ by taking the derivative of the CDF as shown in \textbf{Lemma~2}. 
}
\end{proof}
\begin{corollary}[Enhanced Proposed Scheme]
The proposed scheme can be enhanced further in terms of spectral efficiency by selecting the nearest user as $U_1$.  In this case,  $f_{d_1}(d_1)$ and $F_{d_1}(d_1)$ can be given as in \eqref{CDF_NF}.
 \end{corollary}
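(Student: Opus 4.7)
The plan is to recognize the enhanced scheme as a standard minimum order statistic calculation. In the original proposed scheme of Lemma~2, $U_1$ is drawn as a single user distributed uniformly over the near-region disk, whereas the enhancement draws $N$ candidate users i.i.d.\ from the same underlying spatial distribution and then assigns $U_1$ to the one with the smallest distance to the AP. The resulting $d_1$ is therefore the minimum of $N$ i.i.d.\ samples from the density $f_r$ with CDF $F_r$, which is precisely the situation that already produced \eqref{CDF_NF} for the nearest-farthest scheme.

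First I would invoke the standard complementary-CDF argument: by independence of the $N$ candidate distances,
\begin{equation*}
\mathrm{Pr}(d_1 > d) = \prod_{i=1}^{N} \mathrm{Pr}(r_i > d) = \bigl[1 - F_r(d)\bigr]^{N},
\end{equation*}
so $F_{d_1}(d_1) = 1 - [1 - F_r(d_1)]^{N}$. Differentiating in $d_1$ and applying the chain rule immediately delivers $f_{d_1}(d_1) = N[1-F_r(d_1)]^{N-1} f_r(d_1)$, matching the expressions in \eqref{CDF_NF}. This step is essentially verbatim the derivation already used for the nearest-farthest scheme, so no new machinery is needed.

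The main subtlety, and the part that most deserves care, is the support and normalization of the base CDF $F_r$. In Lemma~2 the near-user is constrained to $[0, R_{\mathrm{th1}}]$ to preserve the sufficient condition for NOMA to outperform OMA, so the per-user distance CDF entering \eqref{CDF_NF} under the enhancement should be the truncated uniform on $[0, R_{\mathrm{th1}}]$, i.e.\ $F_r(r) = r^2 / R_{\mathrm{th1}}^2$, rather than the full-disk $F_r(r) = r^2 / R^2$ that appears in the nearest-farthest scheme. I would justify this by pointing out that the pool of $N$ candidates is restricted to users already in the near-region $[0, R_{\mathrm{th1}}]$; selecting the closest among them only pushes $d_1$ towards zero and therefore strengthens, rather than threatens, the NOMA-over-OMA condition of Lemma~1.

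Finally, I would dispatch the spectral-efficiency claim with a short stochastic-ordering remark rather than a direct integral: the minimum of $N$ i.i.d.\ samples is stochastically smaller than any single sample from the same law, so the channel power $|h_L(d_1)|^2 = \zeta d_1^{-2} e^{-k(f) d_1}$ is stochastically larger than under Lemma~2, which via \eqref{SINR1noma} yields a stochastically larger $\mathrm{SINR}_1^{(\mathrm{noma})}$ and hence a larger $C_1^{(\mathrm{noma})}$. Thus the only real content of the corollary is the order-statistic identification, and I expect the whole argument to fit in a few lines.
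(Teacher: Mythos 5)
Your core order-statistic argument is exactly the paper's route: the corollary is proved by identifying $d_1$ with $\min(r_1,\dots,r_N)$ and reusing the complementary-CDF computation $\Pr(d_1>d)=[1-F_r(d)]^N$ that already yielded \eqref{CDF_NF} for the nearest--farthest benchmark. The stochastic-ordering remark about spectral efficiency is a bonus the paper does not spell out, and it is sound (for fixed $\chi_1$, $\mathrm{SINR}_1^{(\mathrm{noma})}$ is decreasing in $d_1$ even with the molecular absorption noise term, so a stochastically smaller $d_1$ gives a stochastically larger rate).

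The problem is the step you flag as the ``main subtlety.'' You replace the base law by the truncated uniform on $[0,R_{\mathrm{th1}}]$, i.e.\ $F_r(r)=r^2/R_{\mathrm{th1}}^2$, so your final CDF is $1-[1-d_1^2/R_{\mathrm{th1}}^2]^N$. But \eqref{CDF_NF} in the paper is instantiated with the full-disk law $F_r(r)=r^2/R^2$, and the corollary asserts that $F_{d_1}$ is given \emph{as in} \eqref{CDF_NF}; the paper's own proof likewise takes the minimum over all $N$ users in the disk of radius $R$, with no restriction of the candidate pool. So your derivation, as written, proves a different statement from the one claimed: it yields the minimum of $N$ points conditioned to lie in the near region, not the global nearest user. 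Your underlying concern is legitimate --- the global nearest user can fall outside $R_{\mathrm{th1}}$ with probability $[1-R_{\mathrm{th1}}^2/R^2]^N$, in which case the NOMA-over-OMA guarantee of Lemma~1 is not met, and the paper silently ignores this event --- but to match the corollary you must keep the untruncated $F_r$ and, if you wish, note separately that the guarantee holds only on the (high-probability, for large $N$) event $\{d_1\le R_{\mathrm{th1}}\}$. As stated, your truncated formula contradicts the expression the corollary tells you to recover.
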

\begin{proof}
\color{black}{
To enhance the proposed scheme, we consider that the near user is the user with minimum distance, thus,  $f_{d_1}(d_1)$ and $F_{d_1}(d_1)$ can be given as in \eqref{CDF_NF}.
On the other hand, the far user is located beyond \(R_\mathrm{th2} \). So the range of the far user will be as 
$ d_{2}\in\ [R_\mathrm{th2},R] $
, then in this case, $f_{d_2}(d_2)$ and $F_{d_2}(d_2)$ can be given as in \eqref{CDFProposedFar}.
}
\end{proof}
\subsection{OP Analysis}
The  OP is defined as the probability that user $i$ does not achieve its target-spectral efficiency $\tau_i$, i.e.,
\begin{equation}
  {\mathcal{O}^{({\cdot})}_i}=\Pr{\left(\mathrm{C}^{({\cdot})}_{\mathrm{i}}\le\ \mathrm \tau_i\right)},
\end{equation}
where ${({\cdot})}$ denotes NOMA or OMA transmission. Now we formally derive the  OP of users in  NOMA and OMA separately.
\subsubsection{NOMA (Near User)}
The  OP of the near user in the downlink  NOMA mode is given as:
\begin{align} \label{outnear1}
    \mathcal{O}^\mathrm{(noma)}_1 
    & {=} \mathrm {Pr}{\left(\frac{B_1 d_1^{-2}{e^{{-k}\left(f\right)d_1} } { \chi}_1}{N_0+B_1d_1^{-2}\left(1-e^{-k\left(f\right)d_1}\right){ \chi}_1} \le\ { y}_1\right)}\nonumber
    \\&{=}\int_0^R\frac{\gamma\left[m, \frac{ {y_1} N_0 d_1^2}{{\Theta B}_1\left(e^{-k\left(f\right)d_1}\left(1+ y_1 \right)- {y_1} \right)}\right]}{\mathrm{\Gamma}(m)} f_{d_1}(d_1) \; d d_1,
\end{align}
where $\gamma(\cdot)$ is the lower incomplete Gamma function, $\Gamma(\cdot)$ is the complete Gamma function, $m$ is the fading severity,  $\Theta$ is the fading power, \(y_1 = 2^{\tau_1}-1\) and \({B_1{=a}_1G}_{t}G_{r}P\zeta\).
\textcolor{black}{Since the thermal noise \(N_0\) is negligible compared to the molecular absorption noise in THz networks}, \eqref{outnear1} is simplified as:
\small
\begin{equation}\label{OutNOMA}
     \hat{\mathcal{O}}^\mathrm{(noma)}_1 
        =1-\mathrm {Pr}{\left( d_1\le \frac{\ln{(\frac{1+y_1}{{ y}_1})}}{k\left(f\right)}\right)}
       = 1-F_{d_1}\left( \frac{\ln{(\frac{1+y_1}{{ y}_1})}}{k\left(f\right)}\right).
\end{equation}
\normalsize
Now substituting \eqref{CDFRandom} in \eqref{OutNOMA}  or \eqref{outnear1} for random scheme, \eqref{CDF_NF} in \eqref{OutNOMA} or \eqref{outnear1} for nearest-farthest scheme, and \eqref{CDFProposed} in \eqref{OutNOMA}  or \eqref{outnear1} for proposed scheme, gives us the  outage expressions. The  outage of $U_1$ in the proposed scheme is given as:
\begin{equation}\label{OutNOMA2}
    \hat{\mathcal{O}}^\mathrm{(noma)}_1 = 1-{{\left(\frac{\ln{(\frac{1+y_1}{{ y}_1})}}{R_\mathrm{th1}\:k\left(f\right)}\right)^2}}.
\end{equation}
\subsubsection{OMA (Near User)}
The  OPs with and without noise can be given by replacing $B_1$ with \( {A=G}_tG_rP\zeta\) and $y_1$ with \({{x_1}=2}^{2{\tau_1}}- 1\) in \eqref{outnear1}, \eqref{OutNOMA}, and \eqref{OutNOMA2} for all schemes.

\subsubsection{NOMA (Far User)}
The  OP of the far user in the downlink  NOMA mode is formulated as follows:
 \small
 \begin{align}\label{infar1}
    &\mathcal{O}^\mathrm{(noma)}_2  
       = \mathrm {Pr}{\left (\frac{B_2 |{h_{L}(d_2 )}|^2 \chi_2}{B_1 |{h_{L}(d_2 )}|^2 \chi_2 +{N_{2}^{(\mathrm{noma})}}} \le y_2\right)}
       \nonumber  \\&{=} \int_0^R \frac{\gamma\left[m,\ \frac{ y_2 N_0 d_2^2}{_\Theta \left(e^{{-k}\left(f\right)d_2}({B_2}\ {+\ y_2 B}_2)-y_2\ ({B_1+B}_2)\ \ \right)}\right]}{\Gamma(m)} f_{d_2}(d_2) d d_2,
\end{align}
\normalsize
  where, $B_2=(a_1+a_2) G_tG_rP\zeta$. {since \(N_0\) is negligible compared to molecular absorption noise},  \eqref{infar1} is simplified as:
\begin{equation}\label{InNOMA}
    \begin{split}
       \hat{\mathcal{O}}^\mathrm{(noma)}_2
               = 1-F_{d_2}\left(\frac{\ln{\left[\frac{a_2+ a_2y_2}{y_2}\right]}}{k\left(f\right)}\right).
    \end{split}
\end{equation}
Now substituting \eqref{CDFRandomFar} in \eqref{InNOMA}  for random scheme, \eqref{CDF_NF_Far} in \eqref{InNOMA} for nearest-farthest, and \eqref{CDFProposedFar} in \eqref{InNOMA} for proposed scheme, gives us the  outage expressions. The outage of the far user in the proposed scheme is given as:
\begin{equation} \label{inNOMA2}
 \hat{\mathcal{O}}^\mathrm{(noma)}_2=1-
\frac{\left(\frac{\ln{\left[\frac{a_2+ a_2y_2}{y_2}\right]}}{k\left(f\right)}\right)^2-R_\mathrm{th2}^2}{R^2-R_\mathrm{th2}^2}.
\end{equation}
\normalsize
\subsubsection{OMA (Far User)}
The  OPs  can be given by replacing  $d_1$ with $d_2$, $B_1$  with \({A=G}_tG_rP\zeta\) and $y_2$ with \({{x_2}=2}^{2{\tau_2}}- 1\) in \eqref{outnear1}, \eqref{OutNOMA}, and \eqref{OutNOMA2} for all schemes.

\section{Outage Analysis: Multi-Carrier THz-NOMA}
In this section, we present moment-generating function (MGF)-based approach  to derive tractable  outage expressions of the users in a multi-carrier THz-NOMA network. Both the near and far users will get $N$ subcarriers allocated. The spectral efficiency of $U_1$ and $U_2$ is given, respectively, as:
\begin{equation}\label{SINR_MC_user1}
    C_{1}^\mathrm{(noma)}=\sum_{n=1}^{N}\log_2{\left(1+\frac{{B_1{\chi}_{1,n} d_1^{-2}}{{e^{{-k}\left(f_n\right)d_1}}}}{N_{1,n}^\mathrm{(noma)}}\right)},
\end{equation}
\begin{equation}\label{SINR_MC_user2}
 C_{2}^\mathrm{(noma)} =\sum_{n=1}^{N}\log_2{\left(1+\frac{B_2 \chi_{2,n} d_2^{-2} e^{-k(f_n)d_2}}{B_{1}\chi_{2,n}d_2^{-2} e^{-k(f_n)d_2}+N_{2,n}^{(\mathrm{noma})}}\right)}.
\end{equation}
The  OP can thus be formulated as follows:
\begin{align} \label{Eq:26}
    \mathcal{O}_i^{(\mathrm{noma})} &= \mathrm{Pr}\left(C_i^{\mathrm{(noma)}}< \tau_i\right)
 = \mathrm{Pr} {\left(\sum_{n=1}^{N} \mathrm{log_2}{\left({W}_{i,n}\right)} < \tau_i\right)},
\nonumber\\& = \mathrm{Pr} {\left(X_i=\sum_{n=1}^{N} \mathrm{\ln}{\left({W}_{i,n}\right)} < \tau_i\ln(2)\right)}.
   \end{align}
\textcolor{black}{Note that the MGF is a useful tool to deal with the sum of random variables. For instance, characterizing the PDF becomes analytically intractable for a sum of random variables due to  multiple  convolutions required. On the other hand, the MGF of a sum of random variables can be  derived by deriving the product of the MGF of all random variables. Therefore, to analyze the OP of users in a multi-carrier THz-NOMA network, we resort to an MGF-based approach. Our methodology is as follows: \textbf{(i)} we first derive the PDF of $W_{i,n}$ conditional on $d_1$ for near user and $d_2$ for far user, \textbf{(ii)} derive the conditional MGF of $X_{i,n} = \mathrm{ln} \, W_{i,n}$,  \textbf{(iii)} compute the conditional cumulative MGF of $X_i=\sum_{n=1}^N \mathrm{ln} \, W_{i,n}$, i.e., $M_{X_i|d_i}=\prod_{n=1}^N M_{X_{i,n}|d_i}$, and \textbf{(iv)} substitute in the Gil-Pelaez inversion lemma to compute the  OP as follows:  }
\begin{align}
\label{eq:Twave_cov_only}
&\mathcal{O}_i^{(\mathrm{noma})} =  
\frac{1}{2}+\frac{1}{\pi}\int_{0}^{\infty}\frac{\mathrm{Im}[M_{X_{i}}(s) e^{j \omega \tau_i \ln(2)}]}{\omega} d\omega,
\end{align}
where $s=j\omega$, $M_{X_i}(s)=\mathbb{E}_{d_i}[\prod_{n=1}^N, M_{{X_{i,n}|d_i}}]$, and
\begin{align}\label{MGF}
M_{X_{i,n}|d_i}(s) &=\textcolor{black}{ E[e^{{-s} \ln{\left({W}_{i,n}\right)}}|d_i ]= E [{{W_{i,n}}^{-s}} |d_i]}, 
\nonumber\\&= \int_{-\infty}^{\infty}{{W}_{i,n}}^{-s}{f}_{{W}_{i,n}}\left({W}_{i,n}\right) {d}{{W}_{i,n}} .
\end{align}
From \eqref{SINR_MC_user1} and \eqref{SINR_MC_user2}, we have the following: $$\chi_{1,n}= g_1(W_{1,n})=\frac{N_0(W_{1,n}-1)}{W_{1,n}\left({e}^{-k\left(f_n\right)d_1}-1\right)B_1 d_1^{-2} + B_1 d_1^{-2}},$$
$${\chi}_{2,n}=g_2(W_{2,n})=\frac{N_0-N_0{W}_{2,n}}{{{B}_2d}_2^{-2}({ W}_{2,n}-{a_2e}^{{-k}\left(f_n\right)d_2}{{\ W}_{2,n}}-1)}.$$
Since $\chi$ is Gamma distributed, we apply random variable transformation to get the PDF of $W_{i,n}$: 
\small
\begin{align}\label{PDF_MC_user1_and_User_2}
     f_{W_{i,n}}\left(W_{i,n}\right)=\frac{B_{i} d_i^{-2} A_i}{ N_{0} { e}^{-k\left(f_n\right)d_i}}\left.\left(\frac{\chi_{i,n}^{m-1} e^{\frac{-{\chi}_{i,n} }{\Theta}}}{\Theta^m\Gamma\left(m\right)}\right)\right|_{{{\chi}_{i,n}=g_1(W_{i,n})}},
\end{align}\normalsize
where $A_1= \left({W}_{1,n}\left({e}^{-k\left(f_n\right)d_1}-1 \right)+1\right)^2$ and $A_2= \left({W}_{2,n}\left(a_2{e}^{-k\left(f_n\right)d_2}-1 \right)+1\right)^2/a_2$.
Finally, we can obtain  $M_{X_{1,n}}(s)$ for $U_1$ and $M_{X_{2,n}}(s)$ for $U_2$ by substituting \eqref{PDF_MC_user1_and_User_2} for $i=1$ in \eqref{MGF} and $i=2$ in \eqref{MGF}, respectively.

\begin{lemma}[OP in Multi-carrier Network]\label{lem3}
The outage can be computed of user $i$ in a multi-carrier THz-NOMA network with the proposed scheme can be derived as follows: 
\begin{align}\label{lemma3}
\mathcal{O}_i^{(\mathrm{noma})}&= \mathrm{Pr} {\left(X_i=\sum_{n=1}^{N} \mathrm{\ln}{\left({W}_{i,n}\right)} < \tau_i\ln(2)\right)}
\nonumber \\&=\int_0^R \mathbb{U}(X_i-\tau_i \mathrm{ln} (2)) f_{d_i}(d_i)   dd_i .
\end{align}
Now substituting \eqref{CDFRandom}  in \eqref{lemma3} for random scheme and \eqref{CDF_NF}  in \eqref{lemma3} for nearest-farthest scheme gives the respective  OPs.
\end{lemma}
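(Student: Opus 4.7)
The plan is to establish the lemma by conditioning on the user distance $d_i$ and then invoking the Gil--Pelaez inversion formula for the conditional CDF, exactly as set up in (\ref{eq:Twave_cov_only}). Since the distance $d_i$ enters every term $W_{i,n}$ through the path-loss factor $d_i^{-2}e^{-k(f_n)d_i}$, conditioning on $d_i$ is the natural way to decouple the small-scale fading across subcarriers from the large-scale propagation geometry.

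First I would write
\begin{equation*}
\mathcal{O}_i^{(\mathrm{noma})} = \mathbb{E}_{d_i}\!\left[\Pr\!\left(X_i < \tau_i \ln 2 \,\big|\, d_i\right)\right] = \int_0^R \Pr\!\left(X_i < \tau_i \ln 2 \,\big|\, d_i\right) f_{d_i}(d_i)\,dd_i,
\end{equation*}
which is just the law of total probability with the PDF of $d_i$ chosen according to the pairing rule (random, nearest-farthest, proposed). Conditional on $d_i$, the fading gains $\chi_{i,n}$ are independent across $n$, so the transformed variables $X_{i,n}=\ln W_{i,n}$ are conditionally independent, and the conditional MGF of the sum factorises as $M_{X_i|d_i}(s)=\prod_{n=1}^{N} M_{X_{i,n}|d_i}(s)$, where each factor is obtained from (\ref{MGF}) using the PDF in (\ref{PDF_MC_user1_and_User_2}). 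This is the step that lets the multi-carrier sum be handled in closed product form rather than via $N$-fold convolution.

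Next, I would apply the Gil--Pelaez inversion theorem to the conditional CDF to obtain
\begin{equation*}
\Pr\!\left(X_i < \tau_i\ln 2 \,\big|\, d_i\right) = \frac{1}{2} + \frac{1}{\pi}\int_{0}^{\infty}\frac{\mathrm{Im}\!\left[M_{X_i|d_i}(j\omega)\,e^{\,j\omega\tau_i\ln 2}\right]}{\omega}\,d\omega,
\end{equation*}
which is precisely the object the authors denote by $\mathbb{U}(X_i-\tau_i\ln 2)$ in (\ref{lemma3}). Substituting this back into the outer integral over $d_i$ yields the claimed expression, and picking $f_{d_i}(d_i)$ from (\ref{CDFRandom})--(\ref{CDF_NF_Far}) or (\ref{CDFProposed})--(\ref{CDFProposedFar}) specialises it to each pairing scheme. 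A Fubini argument is needed to swap the $d_i$-expectation with the $\omega$-integral; this is justified because the integrand of Gil--Pelaez is bounded in $\omega$ on any compact truncation and $f_{d_i}$ has bounded support $[0,R]$ (or $[0,R_{\mathrm{th}1}]$, $[R_{\mathrm{th}2},R]$ in the proposed scheme).

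The main obstacle I anticipate is not the algebra of the inversion itself but verifying that the conditional MGFs $M_{X_{i,n}|d_i}(j\omega)$ are well-defined and decay fast enough in $\omega$ for the Gil--Pelaez integral to converge, given the somewhat involved $W_{i,n}\mapsto\chi_{i,n}$ change of variables through $g_1,g_2$. A secondary subtlety is that $W_{i,n}$ has bounded support (since $\chi_{i,n}\geq 0$ forces $W_{i,n}$ into a finite interval determined by the SIC ceiling), so the integral in (\ref{MGF}) is effectively over that interval; one should state this explicitly so that the factorisation of the conditional MGF and the application of Gil--Pelaez are rigorous. Once these technical points are handled, the lemma follows directly from the two-step conditioning--and--invert procedure outlined above.
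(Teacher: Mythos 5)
Your proposal proves a different statement from the one the lemma actually makes, and the gap is the single key idea of the paper's proof. Lemma~3 is the \emph{simplified single-integral} expression that the abstract advertises for the case of \emph{negligible thermal noise}. Setting $N_0=0$ in $W_{1,n}$ and $W_{2,n}$ makes the fading gain $\chi_{i,n}$ cancel between numerator and denominator, so that $W_{1,n}=1/(1-e^{-k(f_n)d_1})$ and $W_{2,n}=1/(1-a_2e^{-k(f_n)d_2})$ are \emph{deterministic} functions of $d_i$. Hence $X_i=\sum_n\ln W_{i,n}$ is a constant conditional on $d_i$, the conditional outage probability $\Pr(X_i<\tau_i\ln 2\mid d_i)$ is an indicator (a unit step in $d_i$), and the OP collapses to the single integral $\int_0^R\mathbb{U}(\cdot)\,f_{d_i}(d_i)\,dd_i$. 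That is what $\mathbb{U}$ denotes in \eqref{lemma3}: a unit-step function, not the Gil--Pelaez inversion integral. Your identification ``which is precisely the object the authors denote by $\mathbb{U}(X_i-\tau_i\ln 2)$'' is therefore incorrect; the Gil--Pelaez machinery in \eqref{eq:Twave_cov_only}--\eqref{MGF} is the paper's \emph{general} (thermal-noise-included) route, and Lemma~3 is precisely the shortcut that avoids it.

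Your conditioning-plus-Gil--Pelaez argument is internally coherent and would reproduce \eqref{eq:Twave_cov_only}, i.e.\ a double integral (over $\omega$ and over $d_i$ inside $M_{X_i}$), so it is not wrong as a derivation of the exact OP --- but it does not yield the single-integral form claimed in \eqref{lemma3}, and it silently assumes the fading is still present in $W_{i,n}$, which is exactly the situation the lemma is designed to dispense with. To repair the proof you need only one step: substitute $N_0=0$ into the expressions for $W_{1,n}$ and $W_{2,n}$, observe the cancellation of $\chi_{i,n}$, note that the CDF of a constant is a unit step, and then apply the law of total probability over $d_i$ with the pairing-scheme-specific PDF. (As a side remark, the orientation of the step in \eqref{lemma3} as printed gives the complement of the outage event --- the indicator of $X_i>\tau_i\ln 2$ rather than $X_i<\tau_i\ln 2$ --- which appears to be a typographical slip in the paper; the intended object is the indicator of the outage event.)
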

\begin{proof}
\color{black}{
From \eqref{SINR_MC_user1} and \eqref{SINR_MC_user2}, we can write 
 \begin{align}
     &{W}_{1,n}=1+\frac{{B_1{\chi}_{1,n} d_1^{-2}}{{e^{{-k}\left(f_n\right)d_1}}}}{N_{1,n}^\mathrm{(noma)}}\\&
 {W}_{2,n}=1+\frac{B_2 \chi_{2,n} d_2^{-2} e^{-k(f_n)d_2}}{B_{1}\chi_{2,n}d_2^{-2} e^{-k(f_n)d_2}+N_{2,n}^{(\mathrm{noma})}}
 \end{align}
 For enhanced tractability, we ignore the thermal noise and obtain the following simplified results after algebraic manipulations for near and far users, respectively.
$${W}_{1,n}=\frac{1}{1-e^{-k\left(f_n\right)d_1}} \quad \mathrm{or} \quad  d_1=\frac{1}{k\left(f_n\right)}{\ln{\left[\frac{{W}_{1,n} }{{W}_{1,n}-1}\right]}},$$
$${W}_{2,n}=\frac{1}{ 1-{a_2e}^{-k\left(f_n\right)d_2}} \quad \mathrm{or} \quad  d_2=\frac{1}{k\left(f_n\right)}{\ln{\left[\frac{a_2{W}_{2,n}}{{ W}_{2,n}-1}\right]}}.$$
 From \eqref{Eq:26} and using the fact that $X_i$ is a constant conditional on $d_i$ and the CDF of a constant is a unit-step function, we got to \eqref{lemma3} as in \textbf{Lemma~3}.
}
\end{proof}
From \eqref{eq:13}, we note that the $R_{\mathrm{th}1}$ and $R_{\mathrm{th}2}$ will vary for each carrier, since they depend on the frequency-dependent absorption coefficient~$k(f_n)$ which is computed using \eqref{Ka(f)}. Thus, each carrier should be allocated to a user selected from a different region based on the proposed scheme.
 Therefore, we extended our scheme for multi-carrier networks by choosing a user from a region that is valid for all subcarriers. 
\begin{corollary}
We consider  $R_{\mathrm{th}1}^\mathrm{min} =  \mathrm{min}(R_\mathrm{th_n}), \forall n \in \{1,2,\cdots,N\}$ and $R_{\mathrm{th}2}^\mathrm{max} =  \mathrm{max}(R_\mathrm{th_n}), \forall n \in \{1,2,\cdots,N\}$ for near user and far user, respectively.
Subsequently, after applying $R_\mathrm{th1}^\mathrm{min}$ and $R_\mathrm{th2}^\mathrm{max}$, the PDF  of the distance of near user and far user can be given, respectively, as follows:
\begin{align}\label{final}
 f_{d_1}\left(d_1\right)=\frac{2d_1}{{(R_\mathrm{th1}^\mathrm{min})^2}}, f_{d_2}\left(d_2\right)=\frac{2d_2}{R^2-{(R_\mathrm{th2}^\mathrm{max})^2}} .  
\end{align}
\end{corollary}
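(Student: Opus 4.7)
The plan is to extend Lemma~1 and Lemma~2 to the multi-carrier setting by intersecting the per-subcarrier pairing regions. First, I would invoke Lemma~1 one subcarrier at a time: on carrier $n$, the absorption coefficient $k(f_n)$ determines the thresholds $R_{\mathrm{th}1,n}$ and $R_{\mathrm{th}2,n}$ via \eqref{eq:13}-\eqref{eq:14}, so the condition $C_{i}^{(\mathrm{noma})}>C_{i}^{(\mathrm{oma})}$ on that carrier requires $d_1 < R_{\mathrm{th}1,n}$ for the near user and $d_2 > R_{\mathrm{th}2,n}$ for the far user.

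Next, since a paired user occupies all $N$ subcarriers simultaneously, the NOMA-over-OMA guarantee must be enforced on every carrier at once. Intersecting the admissible near-user intervals $[0,R_{\mathrm{th}1,n}]$ over $n=1,\dots,N$ gives the single feasible set $[0,R_{\mathrm{th}1}^{\mathrm{min}}]$ with $R_{\mathrm{th}1}^{\mathrm{min}}=\min_n R_{\mathrm{th}1,n}$; analogously, intersecting the far-user intervals $[R_{\mathrm{th}2,n},R]$ yields $[R_{\mathrm{th}2}^{\mathrm{max}},R]$ with $R_{\mathrm{th}2}^{\mathrm{max}}=\max_n R_{\mathrm{th}2,n}$. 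This is the step at which the min and max appear in the statement: the ``worst'' carrier dictates the admissible region for each user.

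Finally, I would reuse the truncation argument from the proof of Lemma~2 verbatim. Because users are uniformly distributed in a disk of radius $R$ with $f_r(r)=2r/R^2$, truncating $r$ to the intervals $[0,R_{\mathrm{th}1}^{\mathrm{min}}]$ and $[R_{\mathrm{th}2}^{\mathrm{max}},R]$ and renormalizing produces exactly the densities displayed in \eqref{final}; differentiating the corresponding truncated CDFs (as was done for \eqref{CDFProposed}-\eqref{CDFProposedFar}) completes the derivation.

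The main obstacle is conceptual rather than computational: one must justify that the appropriate way to combine the per-carrier pairing constraints is by intersection (i.e., requiring NOMA-beats-OMA on every subcarrier individually, not merely on the aggregate rate). I would argue for this interpretation by appealing to the design criterion adopted in Lemma~1, which already enforces the per-user spectral-efficiency inequality carrier-by-carrier; once this is granted, the remaining work is mechanical. A minor side check is that $R_{\mathrm{th}1}^{\mathrm{min}}>R_{\mathrm{th}2}^{\mathrm{max}}$ (and both are positive and real) over the bands of interest, which follows from the monotonicity of \eqref{eq:13}-\eqref{eq:14} in $k(f_n)$ together with the condition $a_1<1/2$ already used in Lemma~1.
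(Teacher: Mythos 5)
Your argument matches the paper's: the paper likewise takes the most restrictive threshold across subcarriers (the minimum $R_{\mathrm{th}1,n}$ for the near user and the maximum $R_{\mathrm{th}2,n}$ for the far user) so that no individual carrier's requirement is violated, and then reuses the truncated-uniform derivation of Lemma~2 on the resulting intervals $[0,R_{\mathrm{th}1}^{\mathrm{min}}]$ and $[R_{\mathrm{th}2}^{\mathrm{max}},R]$. One minor caveat: your side claim that $R_{\mathrm{th}1}^{\mathrm{min}}>R_{\mathrm{th}2}^{\mathrm{max}}$ does not follow from monotonicity alone, since the min and max are attained at different carriers and the prefactors $1/k(f_n)$ pull in opposite directions; fortunately nothing in the corollary's PDFs depends on that ordering.
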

\begin{proof}
\color{black}{ The near user will need to be located inside $R_{\mathrm{th}1}$.
However, each subcarrier will observe a different molecular absorption coefficient resulting in a different threshold distance  $R_{\mathrm{th}1}$ at each subcarrier.  Choosing the smallest threshold distance will not violate the threshold requirement of all other subcarriers, thus we consider  $R_{\mathrm{th}1}^\mathrm{min} =  \mathrm{min}(R_\mathrm{th_n}), \forall n \in \{1,2,\cdots,N\}$ for near user. The range of $d_1$ becomes 
$d_{1}\in\ [0,R_{\mathrm{th}1}^\mathrm{min}]$, thus replacing $R_{\mathrm{th}1}$ with $R_{\mathrm{th}1}^\mathrm{min}$ with  in Lemma~2, we can obtain the result in Corollary~2.
Similarly, the far user will need to be located outside $R_{\mathrm{th}2}$. Therefore, choosing the maximum threshold distance will not violate the threshold requirement of all other subcarriers, i.e.,  $R_{\mathrm{th}2}^\mathrm{max} =  \mathrm{max}(R_\mathrm{th_n}), \forall n \in \{1,2,\cdots,N\}$  for far user. The range of $d_2$ becomes 
$d_{2}\in\ [0,R_{\mathrm{th}1}^\mathrm{min}]$, thus replacing $R_{\mathrm{th}2}$ with $R_{\mathrm{th}2}^\mathrm{max}$  in Lemma~2, we can obtain the result in Corollary~2.
}
\end{proof}
Now substituting \eqref{final}  in \eqref{lemma3} for the proposed scheme gives  the outage expressions of users in our proposed scheme.

\begin{table}
  \centering
  \caption{Simulation Parameters for Calculating $k(f)$ in \eqref{Ka(f)} }
    \label{MOLECULAR ABSORPTION PARAMETERS}
    \begin{tabular}{| p{1.375cm} | p{2.4cm} | p{0.8cm} |p{3.05cm} |}
     \hline 
     \textbf{Symbol}  &\textbf{Value}   &\textbf{Symbol}& \textbf{Value}
    \\\hline
    $p_0, p$  &1 atm, 1 atm & $q^{(i,g)}$&0.05 [\%]\\
    $T_0,T$ & 296~K, 396~K  &$k_b$ &1.3806$\times 10^{-23}$ J/K\\ 
    $f_{ {c_0}}^{(i,g)}$ &276 Hz & $T_{\mathrm{sp}}$ &273.15 K\\
    $\gamma$ &0.83 & $N_A$ & 6.0221 $\times 10^{23}$\\   
    $S^{(i,g)}$ &2.66$^{-25}$Hz-m$^2/$mol & h &6.6262$\times 10 ^{-34}$ J s\\
    $\alpha_0^{(i,g)}, \alpha_{\mathrm{air}}^{(i,g)}$ &0.916Hz, 0.1117Hz & c&2.9979 $\times 10^8$ m/s \\
    ${\delta}^{{(i,g)}}$ &0.0251 Hz & V &8.2051$\times 10^{-5}$m$^3$atm/K/mol\\
    \hline
  \end{tabular}
\end{table}

\section{Numerical Results and Discussions}
In this section, we compare the performance of $U_1$ and $U_2$ in a single-carrier and multi-carrier THz-NOMA network, considering a variety of user-pairing schemes  (i) random scheme, (ii) proposed scheme, (iii) enhanced scheme, and (iv) near-far user pairing schemes.

Unless stated otherwise, the  parameters  are listed herein. We consider 300 users are uniformly distributed  in a circular disc of radius 60~m.  The antenna gains $G_{t}$ and $G_{r}$ are set as 20 dB. The AP transmit power  is  1W and the power allocation coefficients  $a_1 =0.33$ and $a_2=1-a_1$. \textcolor{black}{Note that our framework is general for any arbitrary value of $a_1$ in the range $0 \leq a_1 < 0.5$.}  Nakagami-$m$ fading parameter is set as 2 and $\Omega =1$. In multi-carrier NOMA, we consider six subcarriers, where each subcarrier has the same transmission bandwidth, with the frequencies [0.85, 0.9,  0.95,  1.0,   1.05,  1.1] THz and their respective  $k(f)=$[0.0357, 0.04, 0.0446, 0.0494, 0.0545,   0.0598] m$^{-1}$ computed using \eqref{Ka(f)} considering water vapour molecules. We list the numerical values of parameters in \textbf{Table~\ref{MOLECULAR ABSORPTION PARAMETERS}} which are taken from \cite{9216613}. 

\begin{figure}
\begin{center}
\includegraphics[totalheight=5.5cm,width=\linewidth]{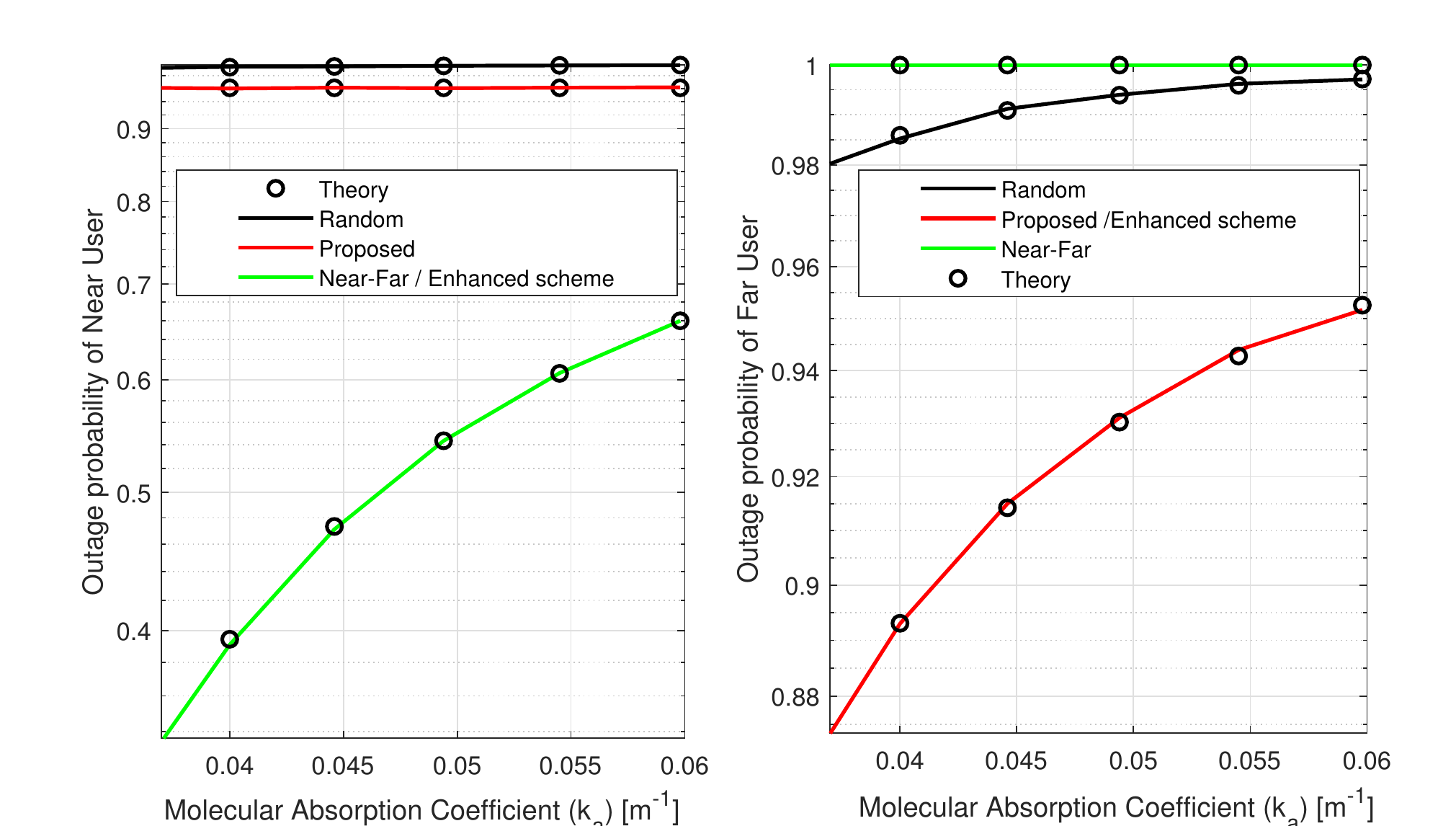}
\caption{Outage performance of near and far users as a function of the molecular absorption coefficient in THz spectrum considering a single-carrier network, $R$ = 60 m, $\tau_1$ = 3 bps/Hz, and $\tau_2$ = 0.5 bps/Hz.}
\label{Fig2}
\end{center}
\vspace{-5 mm}
\end{figure}

Fig.~2  depicts the OP of  $U_1$ and $U_2$ for different values of $k(f)$ and validates \eqref{outnear1} and \eqref{infar1} through Monte-Carlo simulations.  It is shown that the values obtained through derived expressions (shown in circles), exactly match those obtained through simulations (shown by lines). The enhanced scheme (Corollary~1) significantly outperforms the random and nearest-farthest schemes. \textcolor{black}{It is interesting to note that our scheme adapts user selection based on the molecular absorption coefficient $k(f)$. That is, \(R_\mathrm{th1}\) and \( R_\mathrm{th2}\) reduce with the increase in $k(f)$  as can be seen from (\ref{eq:13}) and (\ref{eq:14}), respectively. Thus, closeby users are selected  to combat the effect of increased molecular absorption. Finally, it can be seen that  the OP increases with the increase in frequency and $k(f)$  due to increased molecular absorption; therefore, a lower THz frequency with lower $k(f)$ is preferred.}

\begin{figure}
\begin{center}
\includegraphics[totalheight=5cm,width=\linewidth]{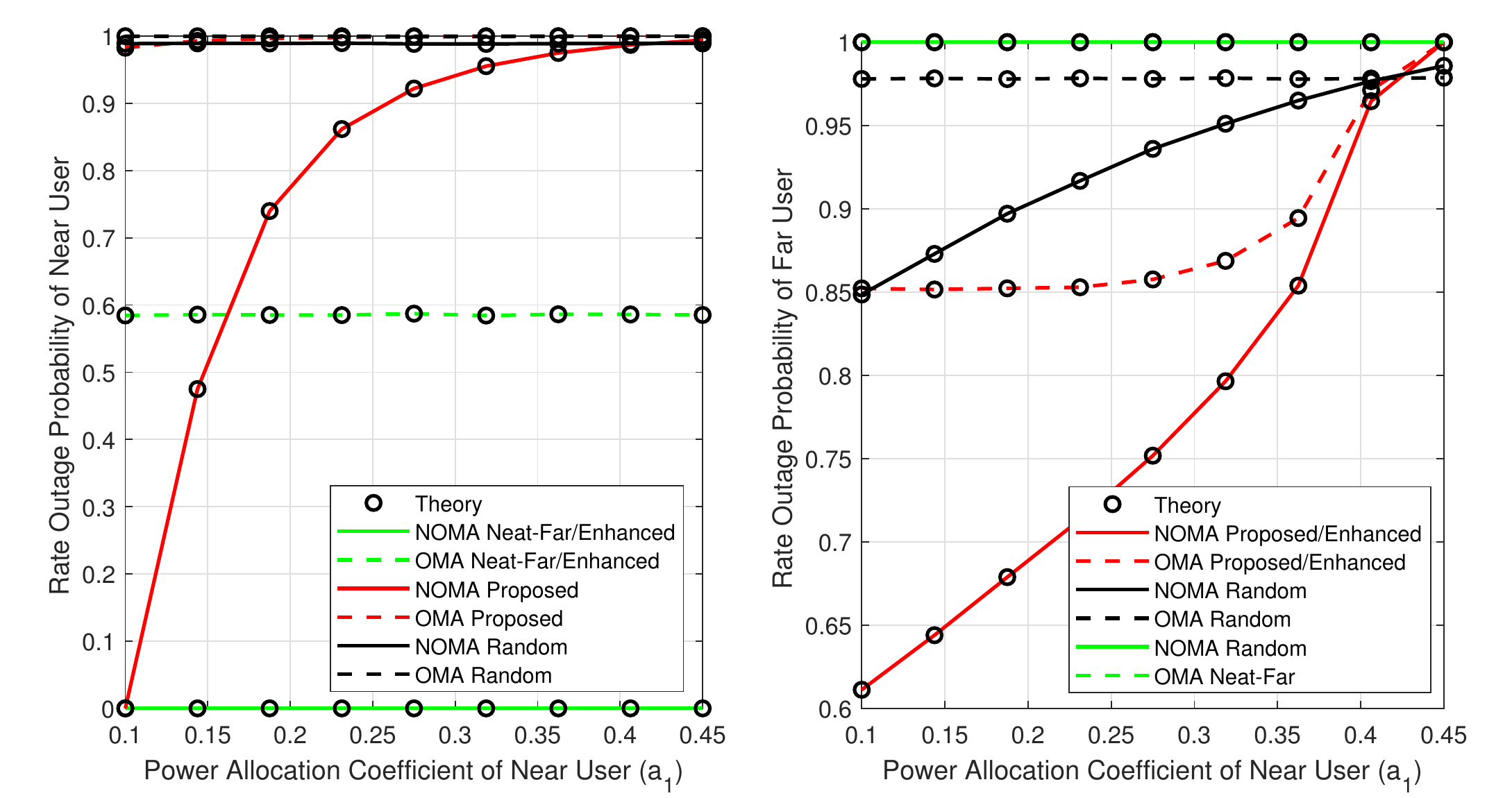}
\caption{Outage performance of  near and far users as a function of the power allocation coefficient of near user considering a single-carrier network, $k(f)=0.03$ $R$ = 60 m, $\tau_1$ = 3 bps/Hz and $\tau_2$ = 0.5 bps/Hz.}
\label{Fig2}
\end{center}
\vspace{-5 mm}
\end{figure}
Fig.~3 demonstrates the  OP of $U_1$ and $U_2$ as a function of the power allocation coefficient of near user $(a_1)$ and highlights the gain of NOMA over OMA. 
\textcolor{black}{With the increase in $a_1$, the  OP at $U_2$ increases due to the increased interference from $U_1$ and increasing values of  \( R_\mathrm{th2}\) as can be seen in \eqref{eq:14}. On the other hand, with the increase in $a_1$, the  OP at $U_1$ increases due to the increasing values of  \( R_\mathrm{th1}\) as can be seen in \eqref{eq:13}. Our scheme adapts the user pairing according to the change in $a_1$. The gain of NOMA over OMA improves for low values of $a_1$ and is significant for the enhanced and proposed schemes. Besides, we can observe that NOMA's performance gets close to OMA's when $a_1$ is close to 0.5.  }

\begin{figure}
\begin{center}
\includegraphics[totalheight=5cm,width=\linewidth]{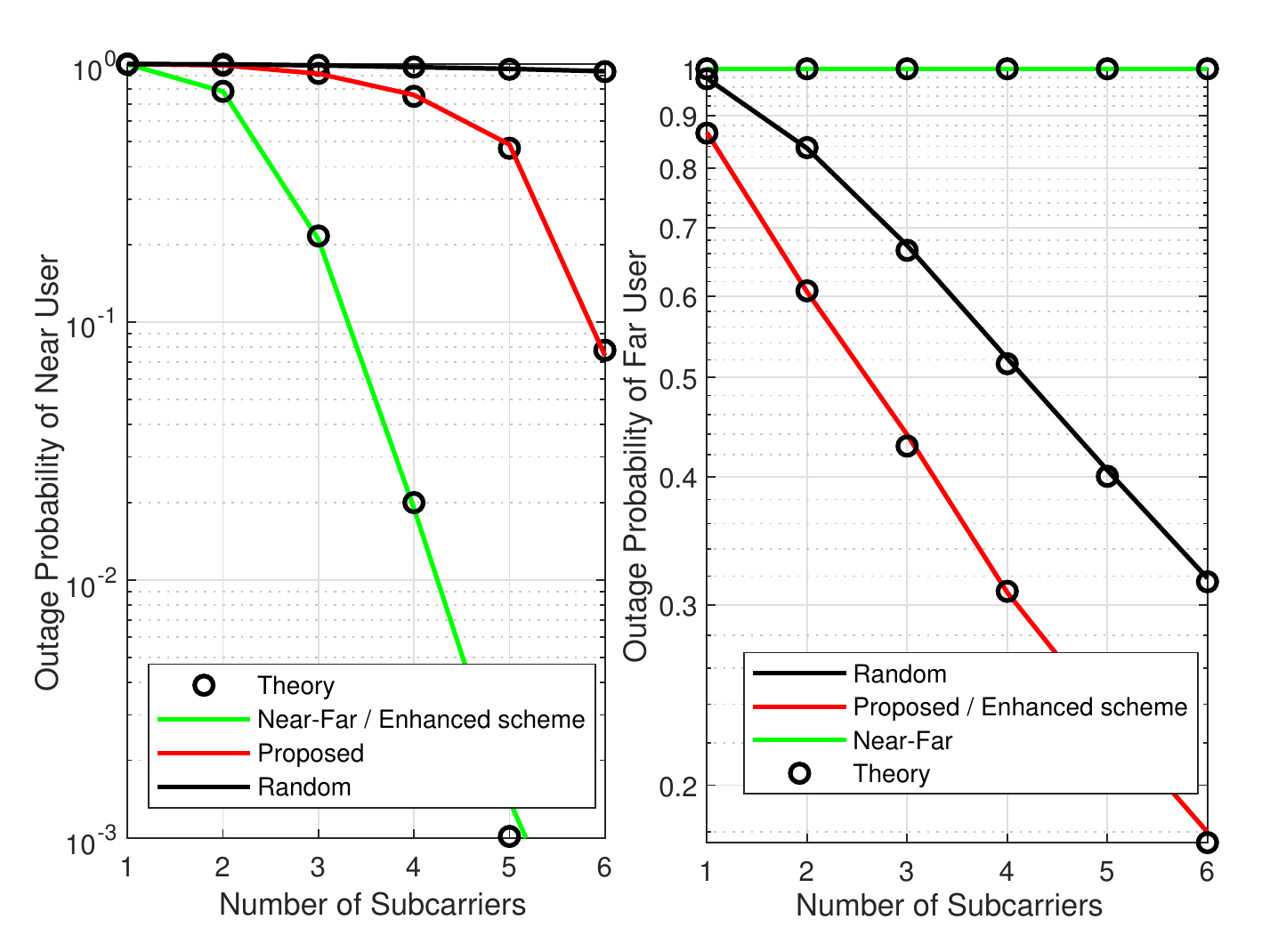}
\caption{Outage performance of near and far users as a function of the number of subcarriers in THz spectrum,  $\tau_1$=8 bps/Hz and $\tau_2$=0.5 bps/Hz.}\label{Fig3}
\end{center}
\vspace{-5 mm}
\end{figure}

\textcolor{black} {Fig.~4 shows the OP of $U_1$ and $U_2$ as a function of the  number of channels for different schemes in a multi-carrier network. The OP decreases significantly when the number of channels allocated to the user increases due to aggregate spectral efficiency. We  note that the increase in subchannels benefits the near user more as the outage decreases exponentially; whereas, the outage decreases linearly for the far user due to its channel conditions.
Our analytical results match well with the simulations, and it is evident that the enhanced scheme outperforms benchmark schemes for both users. }

\section{Conclusion}
We analyzed the performance of users  considering single-carrier and multi-carrier set-up in both THz-NOMA and THz-OMA network. The derived OP expressions are general to capture the entire range of THz spectrum, Nakagami-$m$ fading and molecular absorption noise. We have  developed an adaptive pairing scheme for THz-NOMA network where user selection adapts according to molecular absorption and  the gains of NOMA are guaranteed for each individual user.  \textcolor{black}{The framework can be extended to analyze the diversity  order or optimize network parameters, e.g., transmit power allocations.}

\bibliographystyle{IEEEtran}
\bibliography{References}
\end{document}